\renewcommand{\algocf@captiontext}[2]{#1\algocf@typo. \AlCapFnt{}#2} % text of caption
\def\@algocf@capt@plain{top}
\renewcommand{\algocf@makecaption}[2]{%
  \addtolength{\hsize}{\algomargin}%
  \sbox\@tempboxa{\algocf@captiontext{#1}{#2}}%
  \ifdim\wd\@tempboxa >\hsize%     % if caption is longer than a line
    \hskip .5\algomargin%
    \parbox[t]{\hsize}{\algocf@captiontext{#1}{#2}}% then caption is not centered
  \else%
    \global\@minipagefalse%
    \hbox to\hsize{\box\@tempboxa}% else caption is centered
  \fi%
  \addtolength{\hsize}{-\algomargin}%
}
\def\T{{ \mathrm{\scriptscriptstyle T} }}
\newtheorem{theorem}{Theorem}
\newtheorem{lemma}{Lemma}
\begin{document}

\title{General Bayesian Updating and the Loss-Likelihood Bootstrap}

\author{S. P. Lyddon, C. C. Holmes, S. G. Walker}
\date{}
%%%

\maketitle

\begin{abstract}
In this paper we revisit the weighted likelihood bootstrap, a method that generates samples from an approximate Bayesian posterior of a parametric model. We show that the same method can be derived, without approximation, under a Bayesian nonparametric model with the parameter of interest defined as minimising an expected negative log-likelihood under an unknown sampling distribution. This interpretation enables us to extend the weighted likelihood bootstrap to posterior sampling for parameters minimizing an expected loss. We call this method the loss-likelihood bootstrap. We make a connection between this and general Bayesian updating, which is a way of updating prior belief distributions without needing to construct a global probability model, yet requires the calibration of two forms of loss function. The loss-likelihood bootstrap is used to calibrate the general Bayesian posterior by matching asymptotic Fisher information. We demonstrate the methodology on a number of examples.
\end{abstract}

\section{Bayesian inference and model misspecification}\label{sec:intro}

Bayesian theory provides a comprehensive framework for quantitative reasoning about uncertainty which is built on axiomatic foundations and is well suited to many modern scientific applications. For the statistician, much of the validity of the Bayesian approach hinges on the condition that the statistical model for the data is well specified, in that it contains the underlying data-generating mechanism. This cannot be expected to hold in general, if at all, and the extent and impact of the misspecification can be hard to quantify.

In this article we investigate statistical methods that remain honest about the inability to perfectly model the data. Suppose we observe a sample $x_{1:n} = ( x_1,\ldots, x_n )$ with the $x_i \in \Omega \subseteq \mathbb{R}^p$ being independent draws from an unknown distribution $F_0$, which we assume admits a density, $f_0$. In a parametric Bayesian modelling paradigm, a family of densities $\mathbb{F}_\Theta~=~\{ p(~\cdot~\mid~\theta ) ; \, \theta \in \Theta \subseteq \mathbb{R}^d \}$ is specified along with a prior belief distribution $p(\theta)$ for the unknown parameter $\theta$. Assuming that $f_0 \in \mathbb{F}_\Theta$, beliefs about the true parameter are updated by conditioning on the observed data, using Bayes' rule; $p(\theta \mid x_{1:n} ) \, \propto \, p(\theta) \prod_{i=1}^n p( x_i \mid \theta)$.

We say that such a model is well specified if there exists a $\theta_0 \in \Theta$ supported by the prior such that $f_0(\cdot) = p( \cdot \mid \theta_0)$. If this is the case, then under mild regularity conditions the posterior will concentrate at $\theta_0$ as the number of observations increases. Conversely, if $f_0 \notin \mathcal{F}$, we say that the model is misspecified. In this case the posterior will, under mild regularity conditions, concentrate at the pseudo-true parameter value which minimizes the Kullback--Leibler divergence \citep{Kullback1951} to the true sampling distribution \citep{Berk1966a}; i.e.

\begin{equation*}\label{eq:bayes_argmin}
\theta_0 = \arg \min_{\theta \in \Theta} \, \int \log \left\{ \frac{f_0(x)}{p( x \mid \theta) } \right\} f_0(x) dx 
= \arg \max_{\theta \in \Theta} \, \int \log p( x \mid \theta) \, dF_0(x).
\end{equation*}
\citet{Muller2013} showed that, under regularity conditions, the asymptotic frequentist risk associated with misspecified Bayesian estimators is inferior to that of an artificial posterior which is normally distributed, centred at the maximum likelihood estimator and with a certain covariance matrix. Additionally, the rate at which the Bayesian learns about the parameter of a misspecified model is not necessarily optimal, as it is in the well-specified case; see \cite{Zellner1988}. 

It is therefore important that the statistician accounts for the inadequacy of the model when making inference about $\theta_0$ and we shall study ways to do this. We show that the weighted likelihood bootstrap \citep{Newton1994}, a method originally designed for approximate sampling from the Bayesian posterior of a well specified parametric model, can also be viewed as generating exact posterior samples under a Bayesian nonparametric model which assumes much less about the structure of the data-generating mechanism than the parametric model does. This approach can then be extended to generate posterior samples for a much larger family of parameters than just those indexing parametric models.

More often than not, when a statistical model is used, it is with a specific action in mind, such as a prediction. For the Bayesian, the optimal action is chosen by maximising an expected utility or, equivalently, minimising an expected loss. To do this it is necessary to construct a full probability model for the data. This can be a prohibitively expensive exercise, if it is indeed at all possible. It is natural, then, to question whether this global modelling approach is necessary, or whether we could instead focus solely and directly on the functional that is of interest.

\citet{Bissiri2016} presented a general framework for updating targeted belief distributions of this kind. Instead of restricting themselves to parameters that index a family of distribution functions, the authors considered general parameters $\theta$ whose true value $\theta_0$ minimizes an expected loss for some loss function $\ell: \Theta \times \Omega \rightarrow [0,\infty)$; 
\begin{equation}\label{eq:argmin}
\theta_0 = \theta(F_0) = \arg \min_{\theta \in \Theta} \, \int \ell( \theta, x ) \,dF_0(x).
\end{equation}
Treating $F_0$ as unknown, \citet{Bissiri2016} use a decision-theoretic argument, relying on a coherency property, that leads to a unique functional form for the update of prior beliefs $p(\theta)$, given observations $x_{1:n}$. The resulting posterior distribution $p_{GB,w}$ is fully determined up to a loss scale $w > 0$, and given by
\begin{equation}\label{eq:gb_update}
p_{GB,w}(\theta \mid x_{1:n}) \, \propto \, p(\theta) \exp \{ - w \ell( \theta, x_{1:n} ) \},
\end{equation}
where the loss for multiple observations is defined additively, i.e. $\ell(\theta, x_{1:n}) = \sum_{i=1}^n \ell(\theta, x_i)$. We assume throughout this paper that this posterior distribution is proper; i.e. the loss and prior are provided such that the right-hand side of (\ref{eq:gb_update}) is integrable.

We refer to this posterior distribution as the general Bayesian posterior, because it can be computed for a more general family of parameters than those simply indexing a family of distributions. Bayes rule is recovered as a special case by choosing $w=1$ and using the self-information loss, $\ell(\theta,x)= - \log p(x \mid \theta)$. We call the data-dependent component $\exp \{ -w \ell(\theta, x ) \}$ the loss likelihood as it provides a prior-to-posterior belief update for the parameter $\theta$, in analogy with the likelihood in the Bayesian setting. The loss scale $w$ is a non-negative scalar controlling the learning rate about $\theta$ attributable to the observed data. If the learning rate is too large, the posterior will be too concentrated, exaggerating the extent of the information about $\theta$ coming from the data. Conversely if $w$ is too small, the posterior will underplay the  information in the data relative to the prior.

The theory presented in \citet{Bissiri2016} has a number of attractions compared to the Bayesian approach. It is built upon the assumption that the true underlying data-generating mechanism is unknown. It provides a principled means for performing targeted prior belief updates without the burden of having to construct a global probabilistic model. The prior specification is local to the parameter of interest. However, although a number of suggestions have appeared in the literature, the setting $w$ remains an open problem. We propose that a Bayesian bootstrap should be used for this purpose and we extend the weighted likelihood bootstrap's interpretation to cover the wider class of models built from loss functions, of the form given in (\ref{eq:argmin}). We refer to this method as the loss-likelihood bootstrap. The asymptotic structure of this bootstrap is studied, alongside that of the general Bayesian  posterior. These theoretical results are used to determine a loss scale for the general Bayesian posterior by matching asymptotic posterior information. This provides us with a calibrated general Bayesian posterior with a number of desirable properties, such as the Bayes posterior being recovered if the model is well specified. The general Bayesian approach admits a subjective prior and provides a prior-to-posterior belief update, which in some settings will be preferable over the prior-free loss-likelihood bootstrap. Although, as noted in \cite{Newton1994}, the weighted likelihood bootstrap can provide a better approximation to the Bayesian posterior than a normal approximation, asymptotically, if the prior used is the square of the Jeffreys prior.

%The layout of the rest of paper is as follows. In Section 2 we revisit the weighted likelihood bootstrap and reinterpret the algorithm when the model is misspecified. This then allows us to move to the setting under which general Bayesian updating becomes relevant; i.e. the use of a general loss function connecting parameter to observation. We introduce the loss likelihood bootstrap  and in Theorem 1 we detail the asymptotic properties. In Section 3 we use the loss likelihood bootstrap to calibrate the general Bayesian update; Section 4 contains some illustrations and Section 5 concludes with a discussion.

\section{Revisiting the weighted likelihood bootstrap}\label{sec:wlb}
The weighted likelihood bootstrap \citep{Newton1994} is a method for approximately sampling from a posterior distribution of a well specified parametric statistical model. Samples are generated by computing randomly-weighted maximum likelihood estimates. The weights are drawn from a Dirichlet distribution, scaled by the number of observations. The weights perturb the contribution of each observation to the likelihood; see Algorithm \ref{algo:wlb} for details.

\begin{algorithm}[!h]
\vspace*{-6pt}
\caption{The Weighted Likelihood Bootstrap} \label{algo:wlb}
\vspace*{-9pt}
\begin{tabbing}
   \enspace For $j=1$ to $j=B$: \\
   \qquad Draw random weights $(g_{j1},\ldots,g_{jn})$ with \\ \qquad \qquad $n^{-1}(g_{j1},\ldots,g_{jn}) \sim \mathrm{Dirichlet}(1,\ldots,1)$\\
   \qquad Compute $\theta^{(j)} = \arg \max_{\theta \in \Theta} \prod_{i=1}^n p( x_i \mid \theta )^{g_{ji}}$ \\
\enspace Output $\left( \theta^{(1)},\ldots,\theta^{(B)} \right)$
\end{tabbing}
\vspace*{-6pt}
\end{algorithm}

The method produces independent samples and is trivially parallelizable over $j=1,\ldots,B$, which is advantageous over Markov chain Monte Carlo methods. However, the weighted likelihood bootstrap is not an exact method for sampling from the  posterior of a parametric model and does not accommodate a prior. However, it is asymptotically first-order equivalent to a Bayesian posterior if the parametric model is well specified, with a higher order of asymptotic equivalence, under certain conditions, if the prior used is the square of the Jeffreys prior. 
%We shall explore asymptotic properties later in this section.

If interest is in the $\theta$ minimizing $-\int \log p( x \mid \theta) \, dF_0(x)$ then it is possible to arrive at a method functionally equivalent to the weighted likelihood bootstrap, but without assuming the parametric model is well-specified. Under misspecification, 
the data-generating distribution function is unknown and so it is appropriate to construct a prior on the sampling distribution function $F$, whose unknown true value is $F_0$. Uncertainty about $\theta$ is inherited from uncertainty about $F$. More precisely, a prior on $F$, say $P_{\mathcal{F}}$, induces a prior probability $P_\Theta$ on $\Theta$, via 
\begin{equation}\label{eq:def_theta}
\theta(F)=\arg\max_{\theta\in \Theta}\int \log p(x\mid\theta)\,d F(x);
\end{equation} 
the induced prior being
$P_\Theta(\theta \in A)=P_{\mathcal{F}}(\{F:\,\theta(F)\in A\})$.

The Bayesian nonparametric literature provides a number of methods for constructing priors over the space of distribution functions. The Dirichlet process prior \citep{Ferguson1973} is a natural choice, as it is simple to use. The hyperparameter that determines the Dirichlet process prior is a finite measure $\alpha$. Upon observing $x_{1:n}$ the posterior is also a Dirichlet process with unit mass added to the base measure at each observation; i.e.
\begin{equation*}
{\cal L}\left(F \mid x_{1:n}\right) = \mathrm{DP}\left( \alpha + \sum\limits_{i=1}^n \delta_{x_i} \right).
\end{equation*}
For a detailed account of the Dirichlet process, see \cite{ghosal2017fundamentals}. Under regularity conditions the posterior for $F$ will concentrate at $F_0$. In a noninformative setting, a small value of $\alpha(\Omega)$ is chosen. In the limit $\alpha(\Omega) \to 0$ the posterior distribution is supported only by the observations $x_{1:n}$, with Dirichlet-distributed probabilities for each state. This sampling procedure is commonly referred to as the Bayesian bootstrap \citep{Rubin1981} as it has a limiting Bayesian interpretation and is closely associated to Efron's bootstrap \citep{Efron1979}. Posterior sampling under the Bayesian bootstrap is direct and fast and trivially parallelizable; see Algorithm \ref{algo:bayes_bootstrap} for further details. 

\begin{algorithm}[!h]
\vspace*{-6pt}
\caption{The Bayesian Bootstrap} \label{algo:bayes_bootstrap}
%Suppose the aim is to draw samples from the posterior for functional $\theta(F)$ with the posterior for $F$ a Dirichlet process with parameter $n\,F_n$, where $F_n$ is the empirical distribution function.
\vspace*{-9pt}
\begin{tabbing}
  %\enspace Assume observations $x_1,\ldots,x_n$\\
  %\enspace Let $n_i = \sum_{j=1}^n \mathbb{1} ( x_j = s_i ) $, \ $i=1,\ldots,m$\\
   \enspace For $j=1$ to $j=B$: \\
   \qquad Draw random distribution $F^{(j)} =\sum_{i=1}^n g_{ji}\,\delta_{x_i}$ with \\ \qquad \qquad  $(g_{j1},\ldots,g_{jn}) \sim \mathrm{Dirichlet}(1,\ldots,1)$\\
   \qquad Compute $\theta^{(j)} = \theta \left(F^{(j)}\right)$ as in (\ref{eq:def_theta}) \\
\enspace Output $\left( \theta^{(1)},\ldots,\theta^{(B)} \right)$
\end{tabbing}
\vspace*{-6pt}
\end{algorithm}

If we apply the Bayesian bootstrap strategy to $\theta(F)$ as defined in (\ref{eq:def_theta}), we recover the weighted likelihood bootstrap. Conceptually, however, the parametric and nonparametric constructions are quite different. This can perhaps most easily be seen through the role that the random weights play. For the weighted likelihood bootstrap, the weights perturb the contribution of each observation to the likelihood, whereas for the Bayesian bootstrap the weights represent a posterior sample of the unknown distribution function.

The Bayesian bootstrap can produce posterior samples for a much larger family of parameters than just those that maximize expected log likelihoods, see for example \cite{Chamberlain2003}. Of interest to us is the family of parameters than minimize an expected loss, as in (\ref{eq:argmin}). We call the method of posterior sampling for this family of parameters the loss-likelihood bootstrap; see Algorithm \ref{algo:llb} for details. The reason for giving this name is that the method will prove important when we consider the problem of calibrating general Bayesian posterior distributions in Section \ref{sec:gb_calib}.

\begin{algorithm}[!h]
\vspace*{-6pt}
\caption{The Loss-likelihood Bootstrap} \label{algo:llb}
\vspace*{-9pt}
\begin{tabbing}
   \enspace For $j=1$ to $j=B$: \\
   \qquad Draw random distribution $F^{(j)} =\sum_{i=1}^n g_{ji}\,\delta_{x_i}$ with \\ \qquad \qquad $(g_{j1},\ldots,g_{jn}) \sim \mathrm{Dirichlet}(1,\ldots,1)$\\
   \qquad Compute $\theta^{(j)} = \arg \min_{\theta \in \Theta} \int \ell(\theta,x) dF^{(j)}(x)$ \\
\enspace Output $\left( \theta^{(1)},\ldots,\theta^{(B)} \right)$
\end{tabbing}
\vspace*{-6pt}
\end{algorithm}

Asymptotic properties of the weighted likelihood bootstrap have been studied in the 1991 University of Washington PhD thesis by M.A. Newton. It was shown that if the parametric model is well-specified then the distribution of samples under the weighted likelihood bootstrap is asymptotically first-order correct to a Bayesian posterior distribution. By this we mean that the probability laws ${\cal L}\{n^{1/2} ( \theta - \hat{\theta}_n ) \mid x_{1:n}\}$ and ${\cal L}\{n^{1/2} ( \tilde{\theta}_n - \hat{\theta}_n ) \mid x_{1:n}\}$ converge  to the same limit as $n \rightarrow \infty$, almost surely with respect to $x_{1:n} \sim p(\cdot \mid \theta_0)$, where $\hat{\theta}_n$ is the maximum likelihood estimator, $\tilde{\theta}_n$ is a random sample from the weighted likelihood bootstrap, and $\theta$ represents the parameter under a Bayesian posterior.

The proof proceeds by determining the relevant asymptotic properties for both the weighted likelihood bootstrap under misspecification, and also for the loss-likelihood bootstrap. 
The relevant generalized asymptotic theory is contained in the following theorem:

\begin{theorem}\label{theorem1}
Let $\tilde{\theta}_n$ be a loss-likelihood bootstrap sample of a parameter defined in (\ref{eq:argmin}) with loss function $\ell$, given $n$ observations $x_{1:n}$, and let $P_{LL}$ be its probability measure. Under regularity conditions, for any Borel set $A \subset \mathbb{R}^d$, as $n \rightarrow \infty$ we have
\begin{equation*}
P_{LL} \left\{ n^{1/2} \left( \tilde{\theta}_n - \hat{\theta}_n \right) \in A  \mid x_{1:n} \right\} \ \rightarrow \ P( z \in A),
\end{equation*}  
a.s. $x_{1:\infty}$, where  $z \sim N_d\{ \, 0, \, J( \theta_0)^{-1} I(\theta_0) J(\theta_0)^{-1} \, \}$, with 
$$
I(\theta) = \int_{\Omega} \nabla \ell ( \theta, x) \nabla \ell ( \theta, x)^\T \,dF_0(x) \quad\mbox{and}\quad
J(\theta) = \int_{\Omega} \nabla^2 \ell( \theta, x) \,dF_0(x),
$$
where $\nabla$ is the gradient operator with respect to $\theta$, and $\hat{\theta}_n = \arg \min_\theta  n^{-1} \sum_{i=1}^n \ell( \theta , x_i )$.
\end{theorem}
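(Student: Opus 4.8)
The plan is to treat this as a weighted M-estimation problem and to analyse the conditional law of $\tilde\theta_n$ given the data, almost surely along the data sequence, so that the only remaining randomness is in the Dirichlet weights. Write $g_i$ for a single draw $(g_{1},\ldots,g_{n})\sim\mathrm{Dirichlet}(1,\ldots,1)$. The bootstrap estimator solves the weighted first-order condition $\sum_{i=1}^n g_i\,\nabla\ell(\tilde\theta_n,x_i)=0$, whereas the M-estimator solves the unweighted condition $\sum_{i=1}^n\nabla\ell(\hat\theta_n,x_i)=0$; this latter identity is the algebraic fact that will drive the whole argument. First I would establish conditional consistency, $\tilde\theta_n\to\theta_0$ in conditional probability for almost every data sequence. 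Since $\int\ell(\theta,x)\,dF^{(j)}(x)=\sum_i g_i\,\ell(\theta,x_i)$ with weights concentrating around $n^{-1}$, a weighted uniform law of large numbers shows this random objective converges to $\int\ell(\theta,x)\,dF_0(x)$, whose unique minimiser is $\theta_0$ by (\ref{eq:argmin}); standard M-estimation arguments then transfer convergence of the objective to convergence of the argmin.

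With consistency in hand I would Taylor-expand the weighted score around $\hat\theta_n$ to obtain, for some intermediate $\bar\theta_n$,
\[
n^{1/2}\bigl(\tilde\theta_n-\hat\theta_n\bigr)=-\Bigl(\sum_{i=1}^n g_i\,\nabla^2\ell(\bar\theta_n,x_i)\Bigr)^{-1}\; n^{1/2}\sum_{i=1}^n g_i\,\nabla\ell(\hat\theta_n,x_i)+o_P(1),
\]
so the result reduces to finding the conditional limits of the rescaled score and of the weighted Hessian. For the Hessian, a weighted law of large numbers together with $\bar\theta_n\to\theta_0$ gives $\sum_i g_i\,\nabla^2\ell(\bar\theta_n,x_i)\to J(\theta_0)$, conditionally and almost surely.

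The crux is the score term. I would represent the Dirichlet weights as $g_i=E_i/\sum_{j}E_j$ with $E_i$ independent standard exponentials, and set $\bar E_n=n^{-1}\sum_j E_j$. Because the unweighted score vanishes at $\hat\theta_n$, the random normalising constant cancels exactly:
\[
n^{1/2}\sum_{i=1}^n g_i\,\nabla\ell(\hat\theta_n,x_i)=\frac{1}{\bar E_n}\,n^{-1/2}\sum_{i=1}^n (E_i-1)\,\nabla\ell(\hat\theta_n,x_i),
\]
where the term proportional to $\sum_i\nabla\ell(\hat\theta_n,x_i)=0$ has dropped out. Conditional on the data the summands are independent with mean zero and conditional covariance $n^{-1}\sum_i\nabla\ell(\hat\theta_n,x_i)\nabla\ell(\hat\theta_n,x_i)^{\T}\to I(\theta_0)$, so a conditional Lindeberg central limit theorem, combined with $\bar E_n\to1$, yields convergence to $N_d\{0,I(\theta_0)\}$. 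Slutsky's theorem then combines the two limits into the sandwich form $N_d\{0,J(\theta_0)^{-1}I(\theta_0)J(\theta_0)^{-1}\}$.

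The main obstacle I anticipate is making the conditional central limit theorem and the Taylor remainder control hold almost surely along the data sequence rather than merely in distribution: one must verify the Lindeberg condition for the triangular array $\{(E_i-1)\nabla\ell(\hat\theta_n,x_i)\}$ and the requisite stochastic equicontinuity of the remainder uniformly for almost every $x_{1:\infty}$, which is where the regularity conditions (existence and continuity of $\nabla^2\ell$, local domination, nonsingularity of $J(\theta_0)$, and finiteness of the second moments defining $I(\theta_0)$) will be needed. The weighted uniform law of large numbers for Dirichlet weights underpinning both the consistency step and the Hessian limit is the other technical ingredient that requires care.
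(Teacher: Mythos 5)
Your proposal is correct and takes essentially the same route as the paper's proof, which (relegated to the Supplementary Material) follows Newton's 1991 thesis argument for the weighted likelihood bootstrap: condition on the data, represent the Dirichlet weights as normalized exponentials, Taylor-expand the weighted score about $\hat\theta_n$, and combine a conditional Lindeberg central limit theorem for the score with a weighted law of large numbers for the Hessian via Slutsky. The cancellation you exploit—the unweighted score vanishing at $\hat\theta_n$, so only the $(E_i-1)$ fluctuations contribute—is exactly the mechanism driving that argument, here generalized from log-likelihoods to arbitrary smooth losses.
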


\begin{proof}
The proof follows along the lines of the weighted likelihood bootstrap asymptotic normality proof in the 1991 University of Washington PhD thesis by M.A. Newton. Details can be found in the Supplementary Material.
\end{proof}

The asymptotic covariance matrix in Theorem \ref{theorem1} is a well-known quantity in the robust statistics literature; sometimes called the sandwich covariance matrix. It was shown in \citet{Huber1967} to be the asymptotic covariance matrix for general, potentially misspecified, maximum likelihood estimators. This asymptotic distribution and that of the Bayesian posterior do not coincide if the model is misspecified, in general. \cite{Muller2013} showed that the sandwich covariance matrix can lead to an improvement in frequentist risk over a misspecified Bayesian posterior. Others, such as \citet{Royall2003}, have argued that the  sandwich covariance matrix can be used to make misspecified likelihood functions robust.

General Bayesian models admit a prior distribution and provide a belief update, for the same type of parameters as the loss-likelihood bootstrap, which does not admit a prior. However, the general Bayesian loss scale, $w$ in (\ref{eq:gb_update}), must be calibrated. Under regularity conditions, the impact of a prior diminishes as the number of observations grows large. This indicates a potential strategy for calibrating general Bayesian posterior distributions to the loss-likelihood bootstrap, by ensuring that asymptotically these posteriors contain the same amount of information. We develop this idea in the next section.

\section{Calibrating general Bayesian posteriors by asymptotic covariance matching}\label{sec:gb_calib}

The loss-likelihood bootstrap posterior and the general Bayesian posterior have much in common; they target the same parameter; i.e. the $\theta_0$ in (\ref{eq:argmin}), and are not parametric, though to differing degrees. Thus, for large samples, as the data dominates the posterior, we would expect these distributions to be comparable with respect to their asymptotic normal distributions. Hence, we seek to match these two methods via these distributions, which will then provide a means by which to specify $w$ to match the information in the data.

The general Bayesian posterior of (\ref{eq:gb_update}) has an asymptotic normal distribution, under regularity conditions. A second-order Taylor expansion of the loss function about the empirical risk minimizer provides some intuition about the nature of this distribution. If the minimizing parameter value is in the interior of the parameter space then the first derivative of the sample loss evaluated at the empirical risk minimizer is zero; i.e.
$\nabla \ell (\hat{\theta}_n, x_{1:n} )  = 0.$

The second-order Taylor approximation about $\hat{\theta}_n$ is as follows,
\begin{equation*}
\ell( \theta, x_{1:n} ) = \ell( \hat{\theta}_n, x_{1:n} ) \, + \, \frac{1}{2} (\theta - \hat{\theta}_n)^\T  \, \nabla^2 \ell( \hat{\theta}_n, x_{1:n} )  (\theta - \hat{\theta}_n) \,+ \, o_P( \, n  \,\vert \theta - \hat{\theta}_n \vert^2 \,).
\end{equation*}
Using this approximation in place of $\ell$ in (\ref{eq:gb_update}), we would expect for regular models that
\begin{equation}\label{gbw}
%p_{GB,w}( \, n^{1/2} ( \theta - \hat{\theta}_n) \mid x_{1:n} \,) \rightarrow p ( z' ),
n^{1/2}(\theta-\hat{\theta}_n)\to z'\quad \mbox{in distribution,} \quad \mbox{a.s.} \ \ F^{\infty}_0, 
\end{equation}
where  $z' \sim N_d( 0, w^{-1} J( \theta_0)^{-1}  )$. Regularity conditions and a proof of this result can be found in \citet{Chernozhukov2003}; details can also be found in the Supplementary Material.

Thus under regularity conditions, both the loss likelihood and general Bayesian posterior distributions are asymptotically normal, with the same centering and scaling but different covariance matrices. The Fisher information matrix is a well-understood measure of the information in a sample relating to a parameter. Asymptotically, the posteriors can be considered as normal location models centred at the maximum likelihood estimator. However, the Fisher information matrix is clearly a matrix, and we have a scalar $w$ which can be used to calibrate the general Bayesian distribution.

\cite{Ferentinos1981} considered the problem of constructing one-dimensional information metrics from the Fisher information matrix, and argued that such metrics should be non-negative and strictly increasing functions of its eigenvalues, to ensure the resulting metric satisfies a number of properties. Two natural choices are the trace and determinant of the Fisher information matrix, which equate to the sum and product of the eigenvalues, respectively. This coincides with some quantities known in information theory; the differential entropy of a normal distribution is a function of the determinant of the precision matrix, and the less well-known Fisher information number, sometimes referred to as simply the Fisher information for a density, is equal to the trace of the Fisher information matrix. 
%From the view of comparing two distributions, the Kullback-Leibler divergence is the average of a trace and a log-determinant of one precision matrix multiplied by the inverse of the other.

%Similar ideas can be found in the optimal experimental design literature, see for example \cite{atkinson1992optimum}. The trace and determinant of the Fisher information matrix correspond to the A and D optimality criteria of experimental design. The D-criterion can be interpreted in terms of the volume of asymptotic confidence sets, and the A-criterion as the sum of asymptotic variances of parameter estimates. The preferred design criterion is context-specific and the choice left up to the practitioner.
%\cite{Morgan2014} provide a number of interpretations of the A-criterion for the linear model.

In this work we choose the Fisher information number, i.e. the trace of the Fisher information matrix, as our information metric primarily due to its simplicity in computation. It takes the value zero for a flat posterior, and is positive otherwise, which is not true of differential entropy. It is the sum of the marginal Fisher information for each dimension, which is a well-understood quantity in statistics that summarizes the amount of information in a sample about a parameter. We shall denote the Fisher information number as $K(p)$, which is defined as follows,
\begin{equation*}
K(p) = \int \frac{  \vert \nabla p(\theta) \vert^2}{p(\theta)}d\theta.
\end{equation*}
\cite{Walker2016} showed that the Fisher information number can be used to measure the information in a Bayesian experiment, in analogy with the work of \cite{Lindley1956} who used differential entropy. Further, \cite{Holmes2017} used the Fisher information number to calibrate a power likelihood temperature; see the discussion section for more details on this.

The following lemma determines the loss scale required to match the Fisher information number of the general Bayesian posterior to the loss-likelihood bootstrap.

\begin{lemma}\label{lemma:w_set}
The value of the loss scale $w$ which equates the Fisher information number of the asymptotic distributions, $p_{LL}$ and $p_{GB,w}$, is
\begin{equation}\label{eq:w_set}
w =  \frac{ \mathrm{tr} \left\{ J(\theta_0) I(\theta_0)^{-1}  J(\theta_0)^\T \right\} }{ \mathrm{tr} \left\{ J(\theta_0) \right\} }.
\end{equation}
\end{lemma}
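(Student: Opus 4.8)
The plan is to reduce the lemma to a single computation: the Fisher information number of a $d$-dimensional Gaussian. Both asymptotic laws identified earlier are centred normals --- by Theorem \ref{theorem1} the loss-likelihood bootstrap limit $p_{LL}$ is $N_d(0, \Sigma_{LL})$ with $\Sigma_{LL} = J(\theta_0)^{-1} I(\theta_0) J(\theta_0)^{-1}$, and by (\ref{gbw}) the general Bayesian limit $p_{GB,w}$ is $N_d(0, \Sigma_{GB})$ with $\Sigma_{GB} = w^{-1} J(\theta_0)^{-1}$. Since both laws are expressed on the same $n^{1/2}(\theta - \hat{\theta}_n)$ scale, any common Jacobian factor incurred by the rescaling cancels when the two information numbers are equated, so it suffices to evaluate $K$ on these limiting densities directly. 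Thus the whole lemma follows once I establish the identity $K(N_d(\mu, \Sigma)) = \mathrm{tr}(\Sigma^{-1})$ and substitute the two covariances.

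To prove that identity, I would write the Gaussian score in closed form: for $p = N_d(\mu, \Sigma)$ one has $\nabla \log p(\theta) = -\Sigma^{-1}(\theta - \mu)$, hence $\nabla p(\theta) = -p(\theta)\,\Sigma^{-1}(\theta - \mu)$. Taking the squared norm gives $\vert \nabla p(\theta) \vert^2 = p(\theta)^2 (\theta - \mu)^\T \Sigma^{-2} (\theta - \mu)$, so the integrand in $K$ simplifies to $\vert \nabla p \vert^2 / p = p(\theta)\,(\theta-\mu)^\T \Sigma^{-2}(\theta-\mu)$. Integrating and recognising the result as an expectation, $K(p) = \mathbb{E}\{(\theta-\mu)^\T \Sigma^{-2}(\theta-\mu)\}$, the cyclic property of the trace together with $\mathbb{E}\{(\theta-\mu)(\theta-\mu)^\T\} = \Sigma$ yields $K(p) = \mathrm{tr}(\Sigma^{-2}\Sigma) = \mathrm{tr}(\Sigma^{-1})$.

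With the identity in hand the conclusion is immediate. Because $J(\theta_0)$ is a Hessian it is symmetric, so $\Sigma_{LL}^{-1} = J(\theta_0) I(\theta_0)^{-1} J(\theta_0) = J(\theta_0) I(\theta_0)^{-1} J(\theta_0)^\T$, giving $K(p_{LL}) = \mathrm{tr}\{ J(\theta_0) I(\theta_0)^{-1} J(\theta_0)^\T\}$; likewise $\Sigma_{GB}^{-1} = w\,J(\theta_0)$ gives $K(p_{GB,w}) = w\,\mathrm{tr}\{J(\theta_0)\}$. Setting $K(p_{LL}) = K(p_{GB,w})$ and solving for $w$ produces exactly (\ref{eq:w_set}).

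I expect no serious obstacle: the argument is essentially the Gaussian information-number computation, and the only points requiring care are the symmetry of $J$ (so that the numerator can be written with the transpose as displayed) and checking that the $n^{1/2}$ rescaling does not bias the match, which holds because it acts identically on both densities.
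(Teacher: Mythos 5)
Your proposal is correct and follows essentially the same route as the paper's own proof: apply the identity $K\{N_d(0,\Sigma)\} = \mathrm{tr}(\Sigma^{-1})$ to the two asymptotic normal laws from Theorem \ref{theorem1} and (\ref{gbw}), then equate $\mathrm{tr}\{J(\theta_0)I(\theta_0)^{-1}J(\theta_0)^\T\}$ with $w\,\mathrm{tr}\{J(\theta_0)\}$ and solve for $w$. The only difference is that you supply details the paper omits --- the explicit Gaussian score computation behind the trace identity, the symmetry of $J$ justifying the transpose in the numerator, and the observation that the common $n^{1/2}$ rescaling cancels --- all of which are sound.
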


\begin{proof}
For $z \sim N_d(0, \Sigma)$, if $p(z)$ is the density of $z$, then the Fisher information number is $K(p) = \mathrm{tr} 
( \Sigma^{-1} )$.  Applying this result for the general Bayesian asymptotic posterior, (\ref{gbw}),
we have $K(p_{GB,w}) = \mathrm{tr} \{ w J(\theta_0) \}$. Similarly, for the loss-likelihood bootstrap asymptotic posterior we have $K(p_{LL}) = \mathrm{tr} \{  J(\theta_0) I(\theta_0)^{-1} J(\theta_0) \}$. Equating these expressions gives the required result.
\end{proof}

In practice $\theta_0$ is unknown so the empirical risk minimizer $\hat{\theta}_n$ can be used as a strongly consistent estimator of $\theta_0$. Also, as $F_0$ is unknown, matrices $I$ and $J$ can be estimated empirically.
\begin{equation}\label{eq:w_plug}
\widehat{w} =  \frac{ \mathrm{tr} \left\{ J_n(\hat{\theta}_n) I_n(\hat{\theta}_n)^{-1} J_n(\hat{\theta}_n)^\T \right\} }{ \mathrm{tr} \left\{ J_n( \hat{\theta}_n ) \right\} },
\end{equation}
where
\begin{equation*}
J_n(\theta) = \frac{1}{n} \sum\limits_{i=1}^n \nabla^2 \ell(\theta, x_i) \quad\mbox{and}\quad
 I_n(\theta) = \frac{1}{n} \sum\limits_{i=1}^n \left\{ \nabla \ell(\theta, x_i) \nabla \ell(\theta, x_i)^\T \right\}.
\end{equation*}
This methodology can be applied to a parameter defined by an arbitrary loss function. For the case where a self-information loss is used, and the data is generated from the associated sampling distribution for some value of $\theta_0$, we still recover Bayes' theorem. This is shown in the following lemma; for which regularity conditions can be found in the Supplementary Material.

\begin{lemma}\label{lemma:w_0}
If $f_0(x) = \exp \{\, - w_0\, \ell( \theta_0, x) \,\}$ for some $\theta_0 \in \Theta$, $w_0 >0$, then $w = w_0$.
\end{lemma}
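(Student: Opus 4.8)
The plan is to recognise the hypothesis as the statement that, up to the scale $w_0$, the loss is a self-information loss for a correctly specified parametric family, and then to exploit the classical information matrix (Bartlett) identities to make the sandwich in (\ref{eq:w_set}) collapse. Concretely, set $p(x \mid \theta) = \exp\{-w_0\,\ell(\theta,x)\}$, so that $\ell(\theta,x) = -w_0^{-1}\log p(x \mid \theta)$ and, by hypothesis, $p(\cdot \mid \theta_0) = f_0$. The regularity conditions will be taken to guarantee that $\int_\Omega p(x \mid \theta)\,dx = 1$ holds throughout a neighbourhood of $\theta_0$, and that this identity may be differentiated twice under the integral sign.

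First I would differentiate the normalisation identity once and evaluate at $\theta_0$ under $F_0 = p(\cdot \mid \theta_0)$. Since $\nabla \log p(x \mid \theta) = -w_0\,\nabla\ell(\theta,x)$, the first Bartlett identity $\int \nabla p(x \mid \theta_0)\,dx = 0$ gives $\mathbb{E}_{F_0}\{\nabla\ell(\theta_0,X)\} = 0$. This confirms that the density parameter $\theta_0$ is in fact the expected-loss minimiser of (\ref{eq:argmin}), so that the quadratic expansions underlying both $p_{LL}$ and $p_{GB,w}$ are centred correctly, and it shows that $I(\theta_0)$ is the second moment of a zero-mean score.

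Next I would differentiate a second time. Writing $\nabla^2 p = \{\nabla^2\log p + \nabla\log p\,(\nabla\log p)^\T\}\,p$ and integrating, the second Bartlett identity $\int \nabla^2 p(x \mid \theta_0)\,dx = 0$ yields, at $\theta_0$, the equality $\mathbb{E}_{F_0}\{\nabla^2(-\log p)\} = \mathbb{E}_{F_0}\{\nabla\log p\,(\nabla\log p)^\T\}$. Substituting $\nabla\log p = -w_0\,\nabla\ell$ and $\nabla^2(-\log p) = w_0\,\nabla^2\ell$ and taking expectations under $F_0$ turns this into $w_0\,J(\theta_0) = w_0^2\,I(\theta_0)$, i.e. $I(\theta_0) = w_0^{-1}\,J(\theta_0)$. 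The sandwich then telescopes: since $J(\theta_0)$ is symmetric, $J(\theta_0)\,I(\theta_0)^{-1}\,J(\theta_0)^\T = w_0\,J(\theta_0)$, whence $\mathrm{tr}\{J(\theta_0)\,I(\theta_0)^{-1}\,J(\theta_0)^\T\} = w_0\,\mathrm{tr}\{J(\theta_0)\}$, and (\ref{eq:w_set}) returns $w = w_0$.

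I expect the only genuine obstacle to be the justification of the two Bartlett identities: establishing that $\exp\{-w_0\,\ell(\theta,\cdot)\}$ integrates to one not merely at $\theta_0$ but in a neighbourhood, and that the differentiation and integration in the score-mean-zero and information-matrix-equality steps may be interchanged. This is precisely the content that would be deferred to the regularity conditions in the Supplementary Material. Once these interchange-of-limit conditions are granted, the remainder is the routine algebraic collapse of the sandwich covariance to the scalar multiple $w_0\,J(\theta_0)$.
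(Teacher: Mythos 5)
Your proposal is correct and takes essentially the same route as the paper: the paper's proof simply cites the classical information-matrix equality for the regular density $f_{\theta_0} = \exp\{-w_0\,\ell(\theta_0,\cdot)\}$ to conclude $J(\theta_0) = w_0\, I(\theta_0)$, then substitutes into (\ref{eq:w_set}), exactly as you do. Your extra steps (deriving that identity from the Bartlett differentiation-under-the-integral argument, and noting the first-order identity and the neighbourhood-normalisation regularity) are just an unpacking of the result the paper invokes, with the regularity issues you flag being precisely what the paper defers to its Supplementary Material.
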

\begin{proof}
We recall the result that for a regular density $f_{\theta_0}$,
\begin{equation*}
- \int_{\Omega} \nabla^2 \log f_{\theta_0}(x) \,dF_0(x) \,=\, \int_{\Omega}  \nabla \log f_{\theta_0}(x) \nabla \log f_{\theta_0}(x)^\T \,dF_0(x),
\end{equation*}
where $F_0$ has density $f_0 = f_{\theta_0}$. This immediately gives us $J(\theta_0) = w_0 I(\theta_0)$. Plugging this into (\ref{eq:w_set}) gives us $w=w_0$.
\end{proof}

\section{Illustrations}

\subsection{Normal model with quadratic loss}
Suppose we observe independent data from a multivariate normal distribution, $x_i \sim N_p(\theta_0, \Sigma_0)$,\, $i=1,\ldots,n$, and we have a loss function that is of quadratic form,
\begin{equation*}
\ell( \theta, x) = \frac{1}{2} ( x - \theta )^\T \Sigma_1^{-1} ( x - \theta).
\end{equation*}
Such a quadratic loss is non-trivial, as it could be used to estimate multiple integrals of the type
$\int h_j(x)\,d F_0(x)$, where in the loss function above we would take $x\to h(x)$.

We shall assume throughout that $\Sigma_0$ and $\Sigma_1$ are strictly positive-definite matrices. Differentiating under the integral sign shows that the parameter of interest is the population mean $\theta_0$, regardless of $\Sigma_0$ and $\Sigma_1$. The loss covariance $\Sigma_1$ is free to be set by the practitioner and does not impact the parameter of interest or the loss-likelihood bootstrap, though does change the general Bayesian posterior.

The loss-likelihood bootstrap amounts to repeatedly drawing Dirichlet weights $w_{1:n} \sim \mathrm{Dirichlet}(1,\ldots,1)$ and computing the weighted mean $\tilde{\theta}_n = \sum_i w_i x_i$. Using standard properties of the Dirichlet distribution it can be shown that this posterior is centred at the sample mean with a covariance matrix given by
%E\left( \tilde{\theta} \mid x_{1:n} \right) = n^{-1} \sum_{i=1}^n x_i , \qquad 
\begin{equation*}
\mathrm{var} \left( \tilde{\theta}_n \mid x_{1:n} \right) = x_{1:n}^\T \left\{ \frac{1}{n(n+1)} I_{n\times n} - \frac{1}{n^2 (n+1)} 1_{n \times n} \right\} x_{1:n},
\end{equation*}
where $I_{n\times n}$ is the $n$-dimensional identity matrix and $1_{n \times n}$ is the $n \times n$ matrix with every element equal to $1$.  Given we know the distribution of $x$, we can compute the expectation of these quantities,
\begin{equation*}
E \left\{ E \left( \tilde{\theta}_n \mid x_{1:n} \right) \right\}= \theta_0, \qquad 
E \left\{ \mathrm{var} \left( \tilde{\theta}_n \mid x_{1:n} \right) \right\} = \frac{n-1}{n(n+1)} \Sigma_0.
\end{equation*}
Now let us consider the general Bayesian approach. If $\Sigma_0 = \Sigma_1$ then the loss function is equal, up to a constant, to the negative log likelihood associated with the data-generating mechanism. We say that this loss is well-specified, and find from Lemma \ref{lemma:w_set}, that we should set $w=1$. For this special case, general Bayesian updating coincides with Bayesian updating under a normal location model.

Given our knowledge of the distribution of the data, we can determine the matrices $I$ and $J$ using standard properties of the normal distribution. In particular, we have $I(\theta_0) = \Sigma_1^{-1} \Sigma_0 \Sigma_1^{-1}$ and $J(\theta_0) = \Sigma_1^{-1}$. Using Lemma \ref{lemma:w_set} we get the following expression for the loss scale that calibrates the asymptotic posterior Fisher information number,
\begin{equation}\label{eq:w_normal}
w =  \frac{ \mathrm{tr} ( \Sigma_0^{-1} ) }{ \mathrm{tr} ( \Sigma_1^{-1} ) }.
\end{equation}
Suppose that the loss function is the standard quadratic loss ($\Sigma_1 = I_p$), and each dimension of $x$ is independent, with $\Sigma_0 = \mathrm{diag}( \sigma_1^2, \ldots, \sigma_p^2)$, the $p \times p$ diagonal matrix with diagonal elements $(\sigma_1^2,\ldots,\sigma_p^2)$. Plugging these expressions into (\ref{eq:w_normal}) we obtain $w= p^{-1} \sum_{i=1}^p \sigma_i^{-2}$. That is, the loss scale is the average precision. If $\max_i \sigma_i^2 < 1$, then the data is under-dispersed relative to the loss (considering the loss as a negative log likelihood). The loss scale calibration acts to correct the loss likelihood to better match the likelihood of the well-specified model. Specifically, it is clear we will obtain a $w>1$, thus reducing the variance associated with our loss likelihood. Similarly if $\min_i \sigma_i^2 > 1$ then we obtain $w<1$ which calibrates the loss to better match the data. We do not expect in all cases for our loss likelihood to match the likelihood under the well-specified model, as the data's covariance structure of our data has many more degrees of freedom than the scalar $w$ available to calibrate the general Bayesian posterior. Reassuringly, Lemma \ref{lemma:w_0} tells us that if $\Sigma_1 = w_0 \Sigma_0$ we do obtain correct calibration; i.e. $w=w_0$.

More generally, given matrices $\Sigma_0$ and $\Sigma_1$ we obtain the following general Bayesian posterior:
\begin{equation*}
p_{GB} ( \theta \mid x_{1:n} ) \ \propto \ p(\theta)\, \exp \left[\, - \frac{1}{2} (\theta - \bar{x} )^\T  \left\{ \frac{ n \, \mathrm{tr}( \Sigma_0^{-1} )}{  \mathrm{tr}( \Sigma_1^{-1} )} \Sigma_1^{-1} \right\} ( \theta - \bar{x} ) \, \right],
\end{equation*}
where $\bar{x} = n^{-1} \sum_{i=1}^n x_i$. The scale of the general Bayes loss is set to match the trace of the precision matrix of the loss likelihood with that of the data-generating mechanism. For univariate data this setting of $w$ ensures that the loss likelihood is equal to the likelihood, and thus the general Bayesian update will coincide with the Bayesian update.

In practice the distribution of the data will be unknown. In this case we still have $J(\theta) = \Sigma_1^{-1}$ for all $\theta$, and $I(\theta) =  \int \Sigma_1^{-1} (x - \theta) (x - \theta )^\T \Sigma_1^{-1} dF_0(x)$. As $\theta_0$ is the population mean, we can see that $I( \theta_0 ) = \Sigma_1^{-1} \, \mathrm{C}   \,\Sigma_1^{-1}$, where $\mathrm{C} $ is the population covariance matrix for $x$. So we can rewrite $w$ as
%\begin{equation*}
$w =  \mathrm{tr} (  \mathrm{C}^{-1} ) /\mathrm{tr} ( \Sigma_1^{-1} )$.
%\end{equation*}
If $p=1$ , an unbiased estimator $\widehat{w^{-1} }$ of $w^{-1}$ is the sample variance divided by the loss variance. In this case, $\widehat{w} = 1 / \widehat{w^{-1} }$ may be a biased but consistent estimator of $w$. The plug-in estimator of (\ref{eq:w_plug}) is available for $p>1$.

\subsection{Bayesian support vector machine for binary classification}

Consider the problem of binary classification where we observe $ \{ x_i = (y_i, z_i) \in \{-1,1\} \otimes \mathbb{R}^p \ ; \ i=1,\ldots,n \}$, with each $x_i \sim F_0$, where $z_i$ denotes the covariates of an observation belonging to class $y_i$. We would like to predict future $y^*$s given their covariates $z^*$. Specifically, the objective is to learn about the optimal linear classification rule that minimizes the expected loss under $\phi$, a margin-based loss function,
\begin{equation*}
(\alpha_0, \beta_0) = \arg \min_{\alpha,\beta} \int \phi \left\{ y \left( \alpha + \beta^\T z \right) \right\} \,dF_0(y,z).
\end{equation*}
In a general Bayesian framework, we can compute a posterior distribution for the parameter of interest $\theta = (\alpha,\beta)$ given just the loss function, prior beliefs for $\theta$ and a loss scale. Our work in this paper provides a means for determining the loss scale. The loss-likelihood bootstrap also provides a posterior sample without requiring a loss scale, however it does not admit a prior. 

Often the loss function of interest is the $0-1$ loss, $\phi_{0-1}(\xi) = \mathbb{1}( \xi \leq 0 )$, whose non-convexity leads to a number of computational problems relating to optimization under this loss. A popular approach is to use a convex surrogate loss in place of $\phi_{0-1}$. \citet{Bartlett2006} explore this idea formally, and provides some simple conditions for a surrogate loss to be Bayes-risk consistent to the $0-1$ loss.

A popular classification method in the machine learning literature is the support vector machine \citep{Cortes1995a}. Applied to linear classifiers as considered above, this method amounts to penalized optimization of a convex surrogate loss:
\begin{equation*}
(\widehat{\alpha}, \widehat{\beta}) = \arg \min_{\alpha,\beta} \sum\limits_{i=1}^n \phi_0 \left\{ y_i \left( \alpha + \beta^\T z_i \right) \right\} + \ \frac{\lambda}{2}   \parallel \beta \parallel^2.
\end{equation*}
Here $\lambda \geq 0$ is a hyperparameter and $\phi_0( \xi ) = \max( 0 , 1 - \xi )$ is often referred to as the hinge loss. The optimization can be solved efficiently as a convex quadratic programming problem. The output is a linear classification rule that performs well empirically on a wide range of classification problems. The method doesn't provide any uncertainty quantification about the optimal linear boundary. We consider how we can use the ideas developed in this paper to provide uncertainty quantification about the optimal linear discriminant.

The problem with using the hinge loss in our framework is that we use first and second derivatives of the loss in the calculation of $w$. The hinge loss can easily be smoothed however; see for example \citet{zhang2004solving}. We construct a smoothed hinge loss $\phi_2$ that coincides with the first three derivatives of $\phi_0$ outside of $(0,1)$ and is a monotonic polynomial in between:

\begin{equation*}
\phi_2( \xi ) =  \begin{cases}
\frac{1}{2} - \xi, & \mathrm{if} \ \xi < 0 \\
\xi^6 - 3 \xi^5 + \frac{5}{2} \xi^4 - \xi + \frac{1}{2}, & \mathrm{if} \ \xi \in [0,1] \\
0 & \mathrm{otherwise.}
\end{cases} 
\end{equation*}
A comparison of this loss function to the standard hinge loss can be found in Fig. \ref{fig:plot_synthetic_1}.

We can sample from the loss-likelihood bootstrap or the calibrated general Bayes posterior of the parameter that minimses the expectation of $\ell(\alpha, \beta, x ) = \phi_2 \left\{ y \left( \alpha + \beta^\T z \right) \right\}$. We propose the following prediction routine for a new observation $\widehat{y}$ given the covariates of the new observation $z^*$, and a posterior sample. For each element of the posterior sample $(\alpha_i, \beta_i)$ we compute the class prediction $\widehat{y}_i = \mathrm{sign}( \alpha_i + \beta_i^\T z^* )$. We then predict the modal class across all of these predictions.

To test out this classification method, we constructed a synthetic dataset from the following distribution:
\begin{equation*}\label{eq:synthetic}
\mathbb{P}( y = 1) = \frac{1}{2} = \mathbb{P}( y = -1 ), \quad \mbox{with}\quad p(z \mid  y) = N(z\mid  y , 1).
\end{equation*}
A class conditional density plot can be found in Fig. \ref{fig:plot_synthetic_1}. A linear classification boundary attains the Bayes risk under the $0-1$ loss. It is of the form $\widehat{y} = 2 \,\mathbb{1}(z > 0) - 1$, which implies that $\alpha=0$ and $\beta>0$. Our classification posterior should concentrate here for large sample sizes.

\begin{figure}[!htbp]
\begin{center}
   \includegraphics[scale=1]{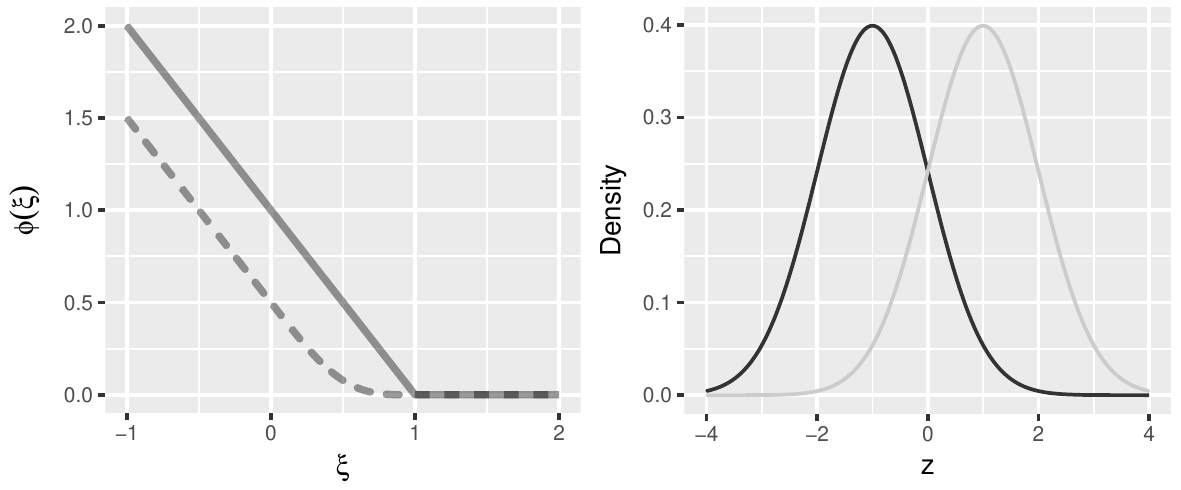}
   \captionsetup{justification=centering}
   \caption{Left: Margin loss plot for hinge (solid) and smooth hinge loss (dashed). Right: Class conditional probability density plot for the synthetic example.}
   \label{fig:plot_synthetic_1}
\end{center}
\end{figure}

We generated a synthetic dataset of size $n=100$ and assigned independent $N(0, 10^2)$ priors to $\alpha$ and $\beta$. For the general Bayes posterior we computed a loss scale using the plug-in estimator $\widehat{w}$ of $w$ specified in (\ref{eq:w_plug}), under the smooth hinge loss $\phi_2$. We generated a posterior sample using the Hamiltonian Monte Carlo routine implemented in probabilistic programming language Stan \citep{Carpenter2016}. We also generated a sample from the loss-likelihood bootstrap. Figure \ref{fig:plot_synthetic_2} shows the joint general Bayesian posterior density of $(\alpha, \beta)$ estimated using a posterior sample of size $10\,000$, as well as marginal density plots for both methods. The general Bayesian posterior distribution matches well the loss-likelihood bootstrap, though the general posterior is calibrated to match asymptotic posterior information, as opposed to coverage.

\begin{figure}[!htbp]
\begin{center}
   \includegraphics[scale=0.85]{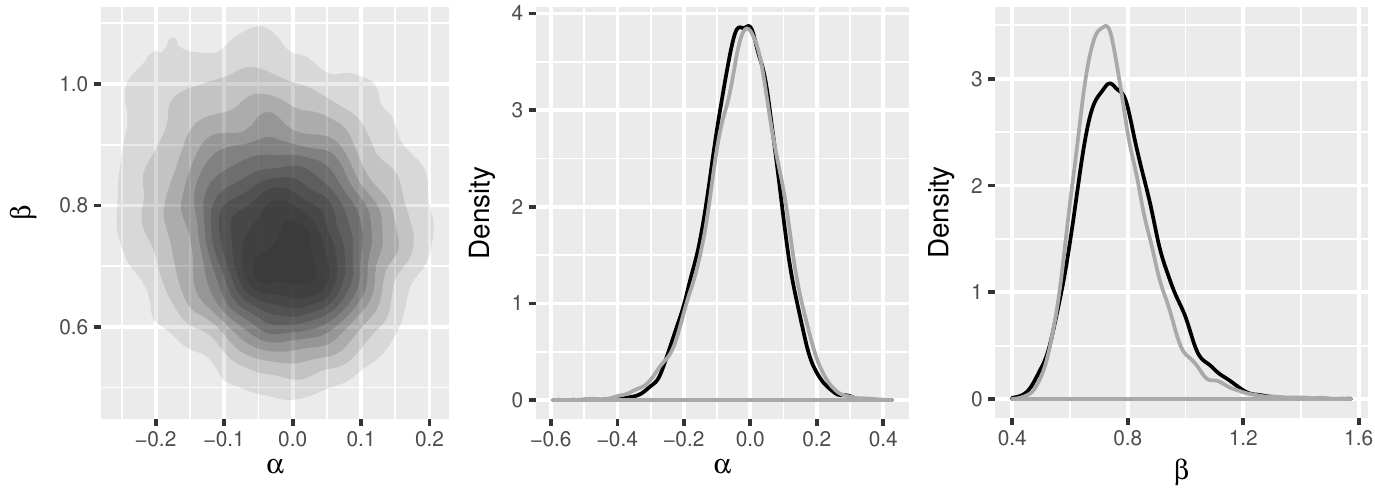}
   \captionsetup{justification=centering}
   \caption{Synthetic example plots. Left: Joint general Bayesian posterior density plot for $(\alpha,\beta)$. Middle: Marginal density plot for $\alpha$ for the general Bayes posterior (black) and the loss-likelihood bootstrap (grey). Right: Marginal density plot of $\beta$ for the general Bayes posterior (black) and loss-likelihood bootstrap (grey).}
   \label{fig:plot_synthetic_2}
\end{center}
\end{figure}

Figure \ref{fig:plot_synthetic_3} shows the general posterior probability that $\alpha + \beta z > 0$ given $z$, as a function of $z$, for various sample sizes. The curves are the average over $100$ repetitions. The loss scale estimate $\widehat{w}$ and the misclassification error on a test dataset of $10\,000$ samples, relative to the misclassification error of a linear support vector machine, are displayed in Fig. \ref{fig:plot_synthetic_3}. We used the e1071 support vector machine implementation in R, with five-fold cross validation to set the regularization parameter. 

Posterior samples of size $1000$ were used for prediction, and for the general Bayesian posterior the first $1000$ samples were discarded as burn-in. In the left pane of Fig. \ref{fig:plot_synthetic_3} we see the posterior predictive probability curves steepening as the number of observations increase, at $z=0$, showing that the posterior mass does concentrate at the optimal classification boundary. The plug-in loss scale $\widehat{w}$ exhibits higher variance for small datasets. Additionally our plug in estimator $\widehat{w}$ exhibits some bias which diminishes with the number of observations. The performance of our classification routines is very similar to a linear support vector machine.

\begin{figure}[!htbp]
\begin{center}
   \includegraphics[scale=0.85]{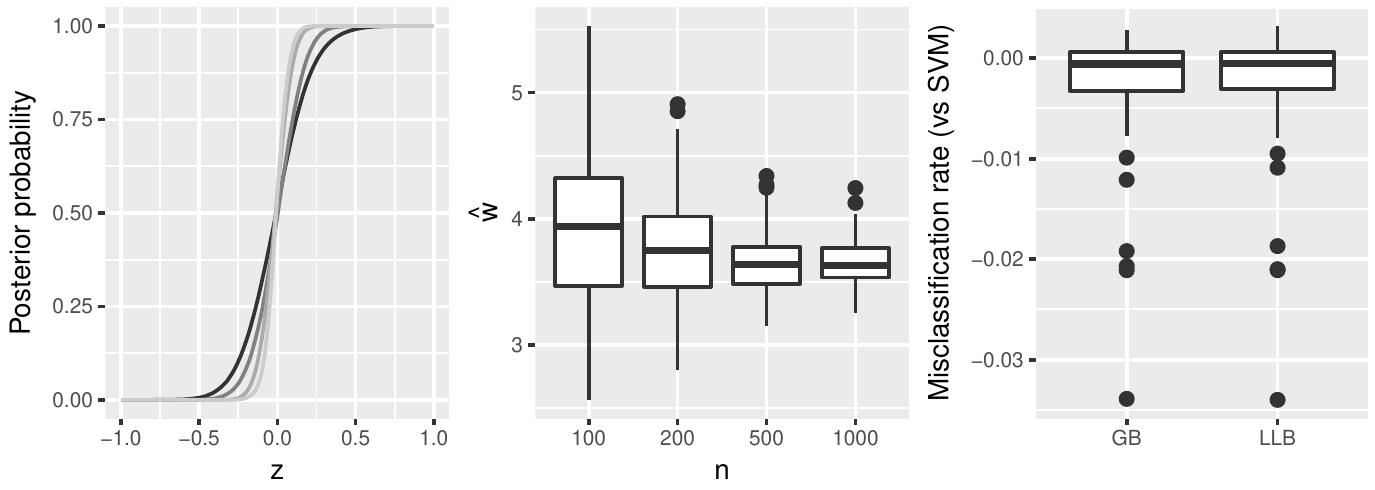}
   \captionsetup{justification=centering}
   \caption{Further synthetic example plots. Left: General Bayesian predictive probability of optimal decision rule predicting class 1, as a function of the covariate $z$, for $n=100,200,500,1000$ (from dark to light). Middle: Box plot for $\widehat{w}$ as a function of sample size, over 100 repeated runs. Right: Box plot for each method, of its misclassification rate minus the misclassification rate of a linear support vector machine, for $n=100$ observations, over $100$ repetitions; GB and LLB refer to general Bayesian and loss-likelihood bootstrap respectively.}
   \label{fig:plot_synthetic_3}
\end{center}
\end{figure}

As a more challenging example, we took the Statlog German Credit dataset from the UCI Machine Learning Depository (http://archive.ics.uci.edu/ml), preprocessed following \citet{Fernandez-Delgado2014}. It contains 24 covariates across 1000 customers, where each customer belongs to one of two classes pertaining to the lending experience of the customer.

The following test was repeated $100$ times: we randomly split our dataset 75:25 to obtain a training and test dataset. For the general Bayesian method, we use an independent $N(0,100)$ prior  on each covariate dimension and the intercept. We computed $\widehat{w}$ and generated a sample of size $B=1000$ in the same way as in the synthetic dataset example. Misclassification rate was recorded in comparison to that of a linear support vector machine.

The classification performance of our general Bayesian procedure is very similar to that of the support vector machine method that does not provide uncertainty quantification. Posterior marginals for $\alpha$ and the first component $\beta_1$ of $\beta$ are displayed in Fig. \ref{fig:german_box} for a single run. They seem to be well aligned again.

\begin{figure}[!htbp]
\begin{center}
   \includegraphics[scale=0.85]{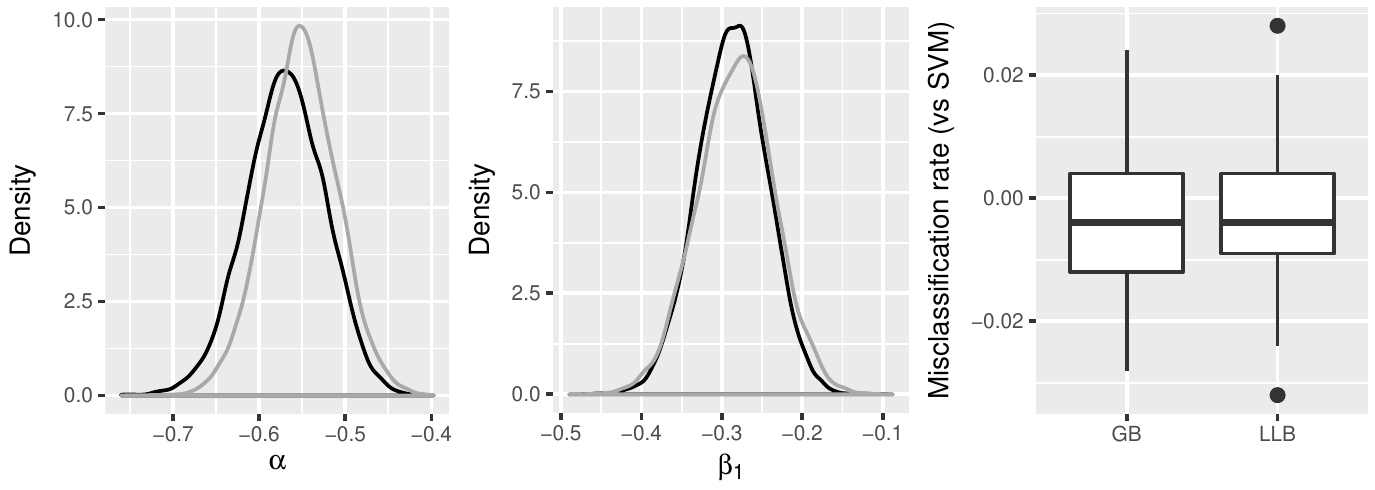}
   \captionsetup{justification=centering}
   \caption{Statlog German Credit plots. Left: Marginal density plot for $\alpha$ for the general Bayes posterior (black) and the loss-likelihood bootstrap (grey). Middle: Marginal density plot for $\beta_1$ for the general Bayes posterior (black) and the loss-likelihood bootstrap (grey). Right: Box plot of misclassification rate minus the support vector machine misclassification rate; GB and LLB refer to general Bayesian and loss-likelihood bootstrap respectively.}
   \label{fig:german_box}
\end{center}
\end{figure}

Both of our methods can easily be adapted to cost-sensitive classification problems, by use of an asymmetric hinge loss \citep{Scott2012}.

\section{Discussion} In this paper we have presented two methods for generating samples from posterior belief distributions about a parameter defined by a loss function. Both methods make minimal assumptions about the data-generating mechanism.

The loss-likelihood bootstrap is a prior-free method built on a Bayesian nonparametric model. It does not require any calibration. Computationally, the method centres around optimization. Independent samples are generated and the method is trivially parallelizable.

The loss-likelihood bootstrap does not admit a prior specification, which may be unappealing to the subjective Bayesian. 
% \cite{Bornn2016} suggest how to update beliefs when a joint prior on the parameter of interest (defined by a moment condition) and the state probabilities are provided. 
One idea would be to importance-weight the Bayesian bootstrap samples with respect to a chosen prior. This is also discussed in \cite{Newton1994} as a means of aligning weighted likelihood bootstrap samples to a Bayesian posterior, though it requires density estimation under the Bayesian bootstrap, which may well suffer from a curse of dimensionality in even moderate dimensions. Furthermore, these weights are prone to degeneracy if the posteriors under the two priors are meaningfully different. \citet{Kessler2015} suggests constructing a prior on the state probabilities that matches the marginal prior for the parameter of interest, and is conditionally non-informative given the parameter.  Again, to compute this conditionally non-informative prior requires some density estimation.

General Bayesian updating provides a decision-theoretic  posterior under a coherency condition and a proper prior. It does require the calibration of data information relative to the prior information, for which we have provided a new method using the loss likelihood bootstrap. 
%This approach does admit a prior distribution and represents a valid updating of prior beliefs. For small sample size, $n$, the presence of a prior may also stabilise the updating. 

Our calibration argument relies on the matching of asymptotic posterior information, which is well motivated as both methods make few distributional assumptions and target the same parameter. In \cite{Bissiri2016} a number of possible alternative means of calibration are discussed, including subjective calibration, unit information loss matching and hierarchical treatment via a loss function on $w$. Although these ideas may suit particular applications, they are less well motivated in general, compared to the proposal in this paper. \cite{Syring2017} provide an iterative method for setting $w$ by ensuring calibration of a single user-specified posterior credible region. However, this method is computationally demanding and switches the issue of setting the loss scale to one of choosing a single credible region to calibrate.

For the special case of a potentially misspecified self-information loss, \citet{Holmes2017} set the loss scale by matching the expected information gain from a single observation, for two experiments. In the first experiment the self-information loss is well specified in relation to the distribution of the observations. In this case the Bayesian update is optimal, so $w=1$. In the second experiment the data come from an unknown distribution so a general Bayesian update with self-information loss is used. For self-information losses $w$ acts as a tempering parameter on the likelihood. The Fisher information distance is used to measure the prior-to-posterior information gain. The data is used to estimate these quantities.

The approach of \cite{Holmes2017} focuses on prior information, whereas our method calibrates asymptotic posterior information to a bootstrap. One important quality of our method is that it recovers the parametric Bayesian learning rate when the model is correct up to an arbitrary tempering (any $w_0 > 0$ in Lemma \ref{lemma:w_0}), whereas \cite{Holmes2017} will only recover the correct rate if the correct log likelihood is used; i.e $w_0=1$.

However, \cite{Holmes2017} is not an option in the general loss case as we do not have a benchmark experiment for which $w$ is known. The criterion of Fisher information matching with the loss-likelihood bootstrap is applicable to calibrating posterior distributions based on loss functions, under minimal assumptions about the underlying data generating mechanism. This includes self information loss, $\ell(\theta,x)=-\log p(x\mid \theta)$, as a particular case.

\bibliographystyle{plainnat}
\bibliography{w-setting}

\begin{thebibliography}{26}
\providecommand{\natexlab}[1]{#1}
\providecommand{\url}[1]{\texttt{#1}}
\expandafter\ifx\csname urlstyle\endcsname\relax
  \providecommand{\doi}[1]{doi: #1}\else
  \providecommand{\doi}{doi: \begingroup \urlstyle{rm}\Url}\fi

\bibitem[Bartlett et~al.(2006)Bartlett, Jordan, and McAuliffe]{Bartlett2006}
Peter~L. Bartlett, Michael~I. Jordan, and Jon~D. McAuliffe.
\newblock {Convexity, Classification, and Risk Bounds}.
\newblock \emph{J Am Stat Assoc}, 101\penalty0 (473):\penalty0 138--156, 2006.

\bibitem[Berk(1966)]{Berk1966a}
Robert~H. Berk.
\newblock {Limiting Behavior of Posterior Distributions when the Model is
  Incorrect}.
\newblock \emph{Ann Math Stat}, 37\penalty0 (1):\penalty0 51--58, 1966.

\bibitem[Bissiri et~al.(2016)Bissiri, Holmes, and Walker]{Bissiri2016}
P.~G. Bissiri, C.~C. Holmes, and S.~G. Walker.
\newblock {A general framework for updating belief distributions}.
\newblock \emph{J R Stat Soc Series B Stat Methodol}, 78\penalty0 (5):\penalty0
  1103--1130, 2016.

\bibitem[Carpenter et~al.(2016)Carpenter, Gelman, Hoffman, Lee, Goodrich,
  Betancourt, Brubaker, Guo, Li, and Riddell]{Carpenter2016}
Bob Carpenter, Andrew Gelman, Matt Hoffman, Daniel Lee, Ben Goodrich, Michael
  Betancourt, Michael~A Brubaker, Jiqiang Guo, Peter Li, and Allen Riddell.
\newblock {STAN}: A probabilistic programming language.
\newblock \emph{J Stat Softw}, 20, 2016.

\bibitem[Chamberlain and Imbens(2003)]{Chamberlain2003}
Gary Chamberlain and Guido~W Imbens.
\newblock {Nonparametric Applications of Bayesian Inference}.
\newblock \emph{J Bus Econ Stat}, 21\penalty0 (1):\penalty0 12--18, 2003.

\bibitem[Chernozhukov and Hong(2003)]{Chernozhukov2003}
Victor Chernozhukov and Han Hong.
\newblock {An MCMC approach to classical estimation}.
\newblock \emph{J Econom}, 115\penalty0 (2):\penalty0 293--346, 2003.

\bibitem[Cortes and Vapnik(1995)]{Cortes1995a}
Corinna Cortes and Vladimir Vapnik.
\newblock {Support-Vector Networks}.
\newblock \emph{Mach Learn}, 20\penalty0 (3):\penalty0 273--297, 1995.

\bibitem[Efron(1979)]{Efron1979}
Bradley Efron.
\newblock {Bootstrap Methods: Another look at the Jackknife}.
\newblock \emph{Ann Stat}, 7\penalty0 (1):\penalty0 1--26, 1979.

\bibitem[Ferentinos and Papaioannou(1981)]{Ferentinos1981}
K.~Ferentinos and T.~Papaioannou.
\newblock {New parametric measures of information}.
\newblock \emph{Inf Control}, 51\penalty0 (3):\penalty0 193--208, 1981.

\bibitem[Ferguson(1973)]{Ferguson1973}
Thomas~S. Ferguson.
\newblock {A Bayesian Analysis of Some Nonparametric Problems}.
\newblock \emph{Ann Stat}, 1\penalty0 (2):\penalty0 209--230, 1973.

\bibitem[Fern{\'{a}}ndez-Delgado et~al.(2014)Fern{\'{a}}ndez-Delgado, Cernadas,
  Barro, and Amorim]{Fernandez-Delgado2014}
Manuel Fern{\'{a}}ndez-Delgado, Eva Cernadas, Sen{\'{e}}n Barro, and Dinani
  Amorim.
\newblock {Do we Need Hundreds of Classifiers to Solve Real World
  Classification Problems?}
\newblock \emph{J Mach Learn Res}, 15:\penalty0 3133--3181, 2014.

\bibitem[Ghosal and van~der Vaart(2017)]{ghosal2017fundamentals}
S.~Ghosal and A.~van~der Vaart.
\newblock \emph{Fundamentals of Nonparametric Bayesian Inference}.
\newblock Cambridge Series in Statistical and Probabilistic Mathematics.
  Cambridge University Press, 2017.
\newblock ISBN 9780521878265.

\bibitem[Holmes and Walker(2017)]{Holmes2017}
C.~C. Holmes and S.~G. Walker.
\newblock Assigning a value to a power likelihood in a general {Bayesian}
  model.
\newblock \emph{Biometrika}, 104\penalty0 (2):\penalty0 497, 2017.

\bibitem[Huber(1967)]{Huber1967}
Peter~J Huber.
\newblock The behavior of maximum likelihood estimates under nonstandard
  conditions.
\newblock In \emph{Proceedings of the fifth Berkeley symposium on mathematical
  statistics and probability}, volume~1, pages 221--233, 1967.

\bibitem[Kessler et~al.(2015)Kessler, Hoff, and Dunson]{Kessler2015}
David~C. Kessler, Peter~D. Hoff, and David~B. Dunson.
\newblock {Marginally specified priors for non-parametric Bayesian estimation}.
\newblock \emph{J R Stat Soc Series B Stat Methodol}, 77\penalty0 (1):\penalty0
  35--58, 2015.

\bibitem[Kullback and Leibler(1951)]{Kullback1951}
S.~Kullback and R.~A. Leibler.
\newblock {On Information and Sufficiency}.
\newblock \emph{Ann Math Stat}, 22\penalty0 (1):\penalty0 79--86, 1951.

\bibitem[Lindley(1956)]{Lindley1956}
D.~V. Lindley.
\newblock {On a Measure of the Information Provided by an Experiment}.
\newblock \emph{Ann Math Stat}, 27\penalty0 (4):\penalty0 986--1005, 1956.

\bibitem[M{\"u}ller(2013)]{Muller2013}
Ulrich~K M{\"u}ller.
\newblock Risk of {Bayesian} inference in misspecified models, and the sandwich
  covariance matrix.
\newblock \emph{Econometrica}, 81\penalty0 (5):\penalty0 1805--1849, 2013.

\bibitem[Newton and Raftery(1994)]{Newton1994}
Michael~A Newton and Adrian~E Raftery.
\newblock {Approximate Bayesian Inference with the Weighted Likelihood
  Bootstrap}.
\newblock \emph{J R Stat Soc Series B Stat Methodol}, 56\penalty0 (1):\penalty0
  3--48, 1994.

\bibitem[Royall and Tsou(2003)]{Royall2003}
Richard Royall and Tsung-Shan Tsou.
\newblock Interpreting statistical evidence by using imperfect models: robust
  adjusted likelihood functions.
\newblock \emph{J R Stat Soc Series B Stat Methodol}, 65\penalty0 (2):\penalty0
  391--404, 2003.

\bibitem[Rubin(1981)]{Rubin1981}
Donald~B. Rubin.
\newblock {The Bayesian bootstrap}.
\newblock \emph{Ann Stat}, 9\penalty0 (1):\penalty0 130--134, 1981.

\bibitem[Scott(2012)]{Scott2012}
Clayton Scott.
\newblock {Calibrated asymmetric surrogate losses}.
\newblock \emph{Electron J Stat}, 6:\penalty0 958--992, 2012.
\newblock ISSN 19357524.

\bibitem[Syring and Martin(2017)]{Syring2017}
Nick Syring and Ryan Martin.
\newblock {Calibrating general posterior credible regions}.
\newblock \emph{arXiv 1509.00922}, 2017.
\newblock URL \url{http://arxiv.org/abs/1509.00922}.

\bibitem[Walker(2016)]{Walker2016}
Stephen~G. Walker.
\newblock {Bayesian information in an experiment and the Fisher information
  distance}.
\newblock \emph{Stat Probab Lett}, 112:\penalty0 5--9, 2016.

\bibitem[Zellner(1988)]{Zellner1988}
Arnold Zellner.
\newblock {Optimal Information Processing and Bayes's Theorem}.
\newblock \emph{Am Stat}, 42\penalty0 (4):\penalty0 278--280, 1988.

\bibitem[Zhang(2004)]{zhang2004solving}
Tong Zhang.
\newblock Solving large scale linear prediction problems using stochastic
  gradient descent algorithms.
\newblock In \emph{Proceedings of the twenty-first international conference on
  Machine learning}, page 116. ACM, 2004.

\end{thebibliography}

\end{document}